\newenvironment{specification}[1][htb]
{
\begin{algorithm}[#1]}{\end{algorithm}}
\newcommand{\balance}{\ensuremath{\textit{bal}}\xspace}
\newcommand{\amount}{\ensuremath{\textit{amt}}\xspace}
\crefname{algocf}{alg.}{algs.}
\Crefname{algocf}{Alg.}{Algs.}
\title{On Payment Channels in Asynchronous Money Transfer Systems}
\author{Oded Naor}{Technion}{}{}{Oded Naor is grateful to the Azrieli Foundation for the award of an Azrieli Fellowship, and to the Technion Hiroshi Fujiwara Cyber-Security Research Center for providing a research grant.}
\author{Idit Keidar}{Technion}{}{}{}
\authorrunning{O. Naor and I. Keidar}
\keywords{Blockchains, Asynchrony, Byzantine faults, Payment channels}
\begin{document}
\sloppy
\maketitle
\vspace{0.5\baselineskip}
\enlargethispage{-0.5\baselineskip}
\begin{abstract}
Money transfer is an abstraction that realizes the core of cryptocurrencies.
It has been shown that, contrary to common belief, money transfer in the presence of Byzantine faults can be implemented in asynchronous networks and does not require consensus.
Nonetheless, existing implementations of money transfer still require a quadratic message complexity per payment, making attempts to scale hard.
In common blockchains, such as Bitcoin and Ethereum, this cost is mitigated by \emph{payment channels} implemented as a second layer on top of the blockchain allowing to make many off-chain payments between two users who share a channel.
Such channels require only on-chain transactions for channel opening and closing, while the intermediate payments are done off-chain with constant message complexity.
But payment channels in-use today require synchrony; therefore, they are inadequate for asynchronous money transfer systems.

In this paper, we provide a series of possibility and impossibility results for payment channels in asynchronous money transfer systems. 
We first prove a quadratic lower bound on the message complexity of on-chain transfers.
Then, we explore two types of payment channels, unidirectional and bidirectional.
We define them as shared memory abstractions and prove that in certain cases they can be implemented as a second layer on top of an asynchronous money transfer system whereas in other cases it is impossible.
\end{abstract}

\section{Introduction}

\begin{table*}[t] 
\caption{Summary of the results.}
\label{table:intro}
\centering
\setlength\doublerulesep{0.5pt}
\resizebox{\textwidth}{!}{\begin{tabular}{| l | l  | l  c |}
\hline
\textbf{Abstraction} & \textbf{Operations} & \multicolumn{1}{c|}{\textbf{Upper bound}} & \textbf{Lower bound} \\
&&\multicolumn{1}{c|}{\textbf{message complexity}}& \textbf{message complexity}\\
\hhline{====}
Asset transfer & transfer & \multicolumn{1}{l|}{$O(n^2)$  \multirow{2}{*}{   \cite{guerraoui2019consensus,auvolat2020money} }} & \multirow{2}{*}{$\Omega(n^2)$ [\Cref{thm:assetTransferMsgComplexity}]} \\ 

&  read & \multicolumn{1}{l|}{$O(1)$  }&  \\
\hhline{====}
Bidirectional payment channel & open & \multicolumn{ 2}{c | }{\multirow{3}{*}{Impossible in asynchronous networks [\Cref{cor:bidirectionalAsync}] }}    \\
& transfer &  &  \\
& close & & \\
\hhline{====}
Unidirectional payment channel &open & \multicolumn{ 2}{c | }{\multirow{4}{*}{Impossible in asynchronous networks [\Cref{cor:unidirectionalAsync}] } }\\
with source close &transfer&& \\
& source close && \\
&target close && \\
\hline

Unidirectional payment channel & open &  \multicolumn{1}{l|}{$O(n^2)$  \multirow{3}{*}{   [Alg.~\ref{alg:uniPaymentChannelImplementation}] } }  & \multirow{3}{*}{$\Omega(n^2)$ [\Cref{lem:uniChannelMessageComplexity}]}	\\ 
& transfer & \multicolumn{1}{l|}{$O(1)$ }&  \\ 
& target close &\multicolumn{1}{l|}{$O(n^2)$ }&    \\
\hline
\end{tabular}}
\end{table*}

The rise of \emph{cryptocurrencies}, such as Bitcoin~\cite{nakamoto2008bitcoin}, Ethereum~\cite{wood2014ethereum},  Ripple~\cite{armknecht2015ripple}, and many more, has revolutionized the possibility of using decentralized money systems.

In 2019, Guerraoui et al.~\cite{guerraoui2019consensus} defined the abstraction of \emph{asset transfer} or \emph{money transfer} capturing the original motivation of Bitcoin.
This  abstraction is based on a set of known users owning accounts, each account has some initial money, and the users can transfer money between the accounts.
It is well-known that deterministic consensus cannot be solved in an asynchronous network~\cite{fischer1985FLP}, meaning that blockchains that rely on consensus require synchrony to work properly.
Nonetheless, Guerraoui et al. showed that the asset transfer problem is weaker than consensus~\cite{lamport1982byzantine,dwork1988consensus}, i.e., in the Byzantine message-passing model the problem can be solved in an asynchronous network.
They provide a concrete implementation of the abstraction in this model  using an asynchronous broadcast service.

Yet, in this solution, each payment requires a  message complexity of  $O(n^2)$, where $n$ is the number of processes in the system.
If the number of processes grows, this per-payment quadratic message complexity can pose a real challenge in scaling the asset transfer network.
In fact, \emph{scalability} is one of the major limiting factors of blockchains and consensus protocols, and extensive research was done to reduce the message complexity~\cite{naor2020expected,eyal2016bitcoin,keidar2021all,yin2019hotstuff,cohen2020coincidence} in various settings.

A promising approach to scale blockchain payments is by using \emph{payment channels}~\cite{poon2016lightning,lind2019teechain,raiden2020raidenImp} as a second off-chain layer on top of the blockchain.
A payment channel can be \emph{opened} between two blockchain account owners via an on-chain deposit made to fund the channel, after which the two users can \emph{transfer} payments on the channel itself off the blockchain.
At any time, one of the users can decide to \emph{close} the channel, after which  the current balance in the channel of each of the users is transferred to their on-chain accounts.
In this scheme, opening or closing a channel requires a blockchain transaction, which incurs a large message complexity, but all the intermediate payments on the channel require message exchange only between the channel users, freeing the blockchain from these payments and messages.

Payment channels have been actively deployed in central blockchains.
For example, Bitcoin's \emph{Lightning Network~(LN)}~\cite{poon2016lightning} is a highly used payment network, with active channels holding a Bitcoin amount equivalent to hundreds of millions of dollars~\cite{1ml}.

One of the downsides of using payment channels such as LN is that the payment channels themselves rely on network synchrony.
For example,  in LN, suppose Alice and Bob have an open payment channel between them and Alice acts maliciously and tries to steal money by closing the channel at a stale state.
LN provides a way for Bob to penalize Alice and confiscate all the money in the channel.
But to do so, Bob has to detect Alice's misbehavior on-chain and act within a predetermined time frame, making this method inappropriate with asynchronous users.

In this paper, we explore the possibility of implementing payment channels in asynchronous asset transfer systems.
The results of the paper and prior art are summarized in~\Cref{table:intro}.
We study four abstractions: asset transfer and different types of payment channels.

First, we prove a lower bound on the complexity of asset transfer without channels.
We show that no matter what implementation is provided for the asset transfer abstraction, it still requires $\Omega(n^2)$ messages for a single payment to be made by one party and observed by others.
This means that the upper bound is tight for the algorithm provided in~\cite{guerraoui2019consensus}, in which transferring money has $O(n^2)$ message complexity and reading the balance of an account costs $O(1)$.
This fundamental result shows that while the synchrony requirement can be relaxed for asset transfer, each payment still requires a rather large number of messages.
This means that second-layer solutions such as payment channels are required for scalability.

Next, we consider payment channel abstractions with operations for opening a channel, transferring money in it, and closing the channel.
We first consider a \emph{bidirectional payment channel} as a second layer atop an asset transfer system, where each side of the channel can make payments to the other.
This is similar to LN~\cite{poon2016lightning}, Teechain~\cite{lind2019teechain}, Sprites~\cite{miller2017sprites}, and more payment channel proposals.
Once we formalize the problem, it is easy to show that synchrony is required, and therefore a bidirectional channel cannot be implemented on top of an asynchronous asset transfer system.

We next look at \emph{unidirectional payment channels}, in which only one user, the channel \emph{source}, can make payments to the other user, the \emph{target}.
We differentiate between two types of unidirectional payment channels: 
If we allow both the source and the target to close the channel (the source and target close operations, respectively), we again show that synchrony is required.
Indeed, a previous design of similar channels~\cite{spilman_2013} still requires synchrony.

On the other hand, if we allow only the target to close the channel, we provide a concrete implementation that works in an asynchronous network.
In this implementation, the opening and closure of the channel require a payment using the asset transfer system, incurring $O(n^2)$ message complexity, whereas every payment on the channel itself requires a single message from the source user to the target user.
We note that once the channel supports a target close the target can claim its transferred funds in the channel, which is not the case when only the source can close the channel.

Finally, we outline an extension of payment channels to payment chains, in which payments are made across multiple channels atomically.
Like their 2-party counterparts, a chain payment over unidirectional channels with only target close can be implemented using a technique used in LN.
As for other channel types, $k$-hop chains are equivalent to $k$-process consensus, i.e., have a consensus number~\cite{herlihy1991wait} of $k$.

To conclude, our contributions in this paper are as follows:
\begin{itemize}
\item We prove a quadratic message complexity lower bound for asynchronous asset transfer systems.
\item We explore payment channels as a key scaling solution for asynchronous asset transfer systems and provide a series of impossibility results for different channel types.
\item  We provide a concrete possibility result and an implementation for an asynchronous payment channel.
\end{itemize}

\subparagraph*{Structure.}
The rest of the paper is structured as follows: \Cref{sec:model} describes the model and preliminaries; \Cref{sec:assetTransfer} details the asset transfer abstraction and proves a lower bound on message complexity; \Cref{sec:biChannel} discusses bidirectional payment channels and \Cref{sec:uniChannel} discusses unidirectional channels;  \Cref{sec:chain} extends the discussion to chain payments; \Cref{sec:relatedWork} discusses related work; and finally, \Cref{sec:conclusion} concludes the paper.

\section{Model and preliminaries}
\label{sec:model}
We study a message-passing distributed system that consists of a set $\Pi = \{p_1, \ldots, p_n\}$ of $n$ processes.
The processes can interact among themselves by sending messages.
An adversary can \emph{corrupt} up to $f < n/3$ processes, where $f  \in \Theta(n)$.
If not mentioned explicitly, we assume a corrupt process is \emph{Byzantine}, i.e., it can deviate from the prescribed algorithm and act arbitrarily.
Any non-corrupt process is \emph{correct}.
A \emph{crash-fail} fault is when a process stops participating in the algorithm.
Every two  processes share an asynchronous reliable link between them, such that if one correct process sends a message to another correct process, it eventually arrives, and the target can ascertain its source.

We assume the existence of a \emph{Public Key Infrastructure (PKI)}, whereby processes that know a private key can use it to sign messages such that all other processes can verify the signature. 
The adversary cannot forge a signature if the private key is owned by a correct process.
We assume each private key is owned by one process.
We further assume \emph{multisignatures}~\cite{bellare2007identity}, whereby in order to produce a valid signature matching a constant-sized public key, more than one private key is used to sign the message.
Note that signing can be sequential.

We study algorithms in the message-passing model that implement  abstractions that are defined as \emph{shared-memory objects}.
A shared memory object has a set of \emph{operations}, and processes access the object via these operations.
Each operation starts with an \emph{invocation} event by a process and ends with a subsequent \emph{response} event.
Invocations and responses are discrete events.

An \emph{implementation} or an \emph{algorithm} $\pi$ of a shared-memory object abstraction is a distributed protocol that defines the behaviors of processes as deterministic state machines, where state transitions are associated with \emph{actions}: sending or receiving messages, and operation invocations or responses.
A \emph{global state} of the system is a mapping to states from systems components, i.e., processes and links.
An \emph{initial global state} is when all processes are in initial states and there are no messages on the links between the processes.
A \emph{run} or an \emph{execution} of an implementation is an alternating series of global states and actions, beginning with some initial global state, such that state transitions occur according to $\pi$.
We assume that the first action of each process is an invocation of an operation and that it does not invoke another operation before receiving a response for its last invoked operation.

Each execution creates a \emph{history} $H$ that consists of a sequence of matching invocations and responses, each with the assigned process that invoked the operation and the matching responses.
A \emph{sub-history $H'$ of $H$} is a subset of $H$'s events.
Let $H|_p$ denote the sub-history of $H$ with process $p$'s events.

A history defines a \emph{partial ordering}: \emph{operation $\textit{op}_1$ precedes $\textit{op}_2$ in history $H$}, labeled $\textit{op}_1 \prec_H \textit{op}_2$, if $\textit{op}_1$'s response event happens before $\textit{op}_2$'s invocation event in $H$.
History $H$ is \emph{sequential} if  each invocation, except perhaps the last, is immediately followed by a matching response.
An operation is \emph{pending} in history $H$ if it has an invocation event in $H$ but does not have a matching response. 
A history $H'$ is a \emph{completion of history $H$} if it is identical to $H$ except for removing zero or more pending operations in $H$ and by adding matching responses for the remaining ones.
A shared-memory abstraction is usually defined in terms of a \emph{sequential specification}.
A \emph{legal} sequential history is a sequential history that preserves the sequential specification, i.e98., the sequential specification is the set of all legal histories.

The correctness criteria we consider is \emph{Byzantine sequential consistency (BSC)}.
This allows to extend sequential consistency~\cite{attiya1994sequential} to runs with Byzantine processes and not only crash-fail and is an adaptation of the definition of Byzantine linearizability~\cite{cohen2021ByzantineLinea}.

First, we formally define a sequentially consistent history for runs with crash-fail faults.
\begin{definition} [sequentially consistent history] \label{def:sequentialConsistency}
Let $E$ be an execution of an algorithm, and $H$ its matching history.
Then $H$ is sequentially consistent if there exists a completion $\widetilde{H}$ of $H$ and a legal sequential history $S$ such that for every process $p$, $S|_p = \widetilde{H}|_p$.
\end{definition}
An algorithm is sequentially consistent if all its histories are sequentially consistent.

For Byzantine sequential consistency (BSC), let $H|_c$ denote the sub-history of $H$ with all of the operations of correct processes.
We say a history is BSC if $H|_c$ can be augmented with operations of Byzantine processes such that the completed history is sequentially consistent.
Formally:
\begin{definition} [Byzantine sequential consistency (BSC)]
A history $H$ is BSC if there exists a history $H'$ such that $H'|_c = H|_c$, and $H'$ is sequentially consistent.
\end{definition}
Similar to sequential consistency, an algorithm is BSC if all its histories are BSC.
We choose BSC as the correctness criterion and not Byzantine linearizability because it simplifies the implementations we provide below.
We explain in \Cref{sec:uniChanelImplementation:UpperBound} how to change the implementation that we provide to satisfy Byzantine linearizability.
The difference between sequential consistency and from linearizability~\cite{herlihy1990linearizability} is that linearizability also preserves real-time order, i.e., for any operations $op_1, op_2$ s.t. $op_1 \prec_{\widetilde{H}} op_2$, then $op_1 \prec_S op_2$.

\section{Asset transfer}
\label{sec:assetTransfer}
\subsection{Asset transfer abstraction}
Let $A$ be an asset transfer abstraction, which is based on the one defined in~\cite{guerraoui2019consensus}.
$A$ holds a set of accounts.
Each account $a \in A$ is defined by some public key, and there is a mapping $\textit{owner}(a) \colon A \mapsto \Pi$, that matches for each account $a$ the process that can produce a signature corresponding to the public key associated with $a$.
In case of an account $b$ associated with a multisignature, $\textit{owner}(b)$ is the set of processes whose private keys can produce a signature matching $b$'s public key.
The state of each account $a$ is of the form $A(a) \in \mathbb{R}_{\geq 0}$ and represents the balance of the account~$a$.
Each account initially holds its initial balance.

$A$ has two operations:
The first, $A.\textit{read}(a)$, returns the balance of account $a$, i.e., it returns $A(a)$ and can be called by any process.
The second is $A.\textit{transfer}(a,[(b_1,\amount_1),\ldots,(b_k,\amount_k)])$, which, for every $1 \leq i \leq k$,  transfers from account $a$'s balance $\amount_i$ and deposits it in $b_i$. 
This call succeeds, and returns \textit{success} if it is called by $\textit{owner}(a)$ and if the account has enough balance to make the transfer, i.e., $A(a) \geq \sum_{i=1}^k \amount_i$.
Otherwise, it returns \textit{fail} and does nothing.

The \textit{transfer} operation is an extension of the one defined in~\cite{guerraoui2019consensus} in that it allows to transfer money from one account to multiple accounts, whereas the original work only allows transferring money from one account to another each time.  

In this paper, we consider implementations of the asset transfer abstraction that are BSC.
We note that the message-passing asset transfer implementation in~\cite{guerraoui2019consensus} is based on a reliable broadcast that preserves source order.
Their correctness criteria is neither Byzantine linearizability nor BSC, but it ensures that  for every transfer operation, there exists a time $t$ such that if a correct process invokes the read operation after $t$, then it observes the changes made by the transfer.
This is a property we use throughout the proofs which are detailed below.

\subsection{Message complexity of asset transfer}
We begin by proving a quadratic message complexity lower bound on any asset transfer implementation in \Cref{sec:assetTransfer:msgComplexity:lowerBound}, and 
\Cref{sec:assetTransfer:msgComplexity:UpperBound} shows this lower bound is tight by discussing a concrete implementation of an asset transfer that has a quadratic message complexity for a transfer operation and a constant message complexity for a read operation.

\subsubsection{Lower bound}
\label{sec:assetTransfer:msgComplexity:lowerBound}
We show that  if $f \in \Theta(n)$, then there is a quadratic message complexity lower bound in runs in which money is transferred to some account $b$, and multiple processes read the balance of $b$.
The proof we use for the lower bound follows the technique used in Dolev-Reischuk's lower bound for Byzantine Broadcast~\cite{dolev1985bounds}.
Formally, we prove the following theorem:
\begin{theorem} \label{thm:assetTransferMsgComplexity}
Consider an algorithm that implements the asset transfer abstraction.
Then there exists a run with a single transfer invocation and multiple read invocations in which the correct processes send at least $(f/2)^2$ messages.
\end{theorem}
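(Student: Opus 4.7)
The plan is to apply the Dolev--Reischuk counting-and-indistinguishability technique and derive a contradiction from a hypothetical low-message execution. Suppose for contradiction that some Byzantine sequentially consistent asset transfer implementation admits a run $\alpha$ with a single transfer and $n-1$ subsequent reads in which correct processes exchange fewer than $(f/2)^2$ messages in total. In $\alpha$, a correct process $s$, the owner of an account $a$ with initial balance $1$, invokes $A.\textit{transfer}(a,[(b,1)])$ for a fresh account $b$ of initial balance $0$; after $s$'s response, every other correct process invokes $A.\textit{read}(b)$. By the strengthened BSC property recalled at the end of \Cref{sec:assetTransfer}, each such read returns the new balance $1$.

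By a pigeon-hole argument on incoming messages (the fewer than $(f/2)^2$ total are split among $n-1 > f$ correct readers, so fewer than $f/2$ of them can each receive $\geq f/2$ messages), there is a correct reader $r \neq s$ whose set $M_r$ of correct senders in $\alpha$ satisfies $|M_r| < f/2$. I would then build an adversarial twin $\alpha'$ in which $M_r$ is Byzantine (within budget since $|M_r| < f/2 \leq f$): each of its processes behaves exactly as in $\alpha$ toward every process other than $r$, but sends nothing to $r$. Correct processes outside $\{r\} \cup M_r$ receive identical messages in $\alpha$ and $\alpha'$, so their internal states, actions, and read responses coincide; in particular $s$'s transfer completes in $\alpha'$ at some time $t$, and since $r$'s read is scheduled after $t$, the strengthened BSC property forces $r$'s response in $\alpha'$ to also be $1$.

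The contradiction comes from comparing $\alpha'$ with a second run $\alpha''$ in which no process is corrupted and $s$ never invokes a transfer; here $b$'s balance never changes, so $r$'s read must return $0$. If $r$'s local view is the same in $\alpha'$ and $\alpha''$, determinism forces it to return the same value in both, which is impossible. The main obstacle is establishing this view-equivalence. I expect to prove it by induction on $r$'s local history, using that no correct process outside $M_r$ sends any message to $r$ in $\alpha$ (by the very definition of $M_r$) and that all such processes have identical states in $\alpha$ and $\alpha'$, so none of them send to $r$ in $\alpha'$ either; symmetrically, in $\alpha''$ those processes are also silent toward $r$, leaving $r$'s incoming traffic empty in both runs while its outgoing traffic is determined by a local state common to both. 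A small additional case handles $s \in M_r$: the adversary then also corrupts $s$ (still within the budget since $|M_r| \leq f/2 < f$), and the Byzantine $s$ uses its private key to sign the transfer consistently toward the remaining correct processes, so that their reads still observe the transfer and the rest of the argument is unchanged.
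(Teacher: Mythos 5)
There is a genuine gap in the step that moves from $\alpha$ to $\alpha'$. In $\alpha$ the reader $r$ is correct and actually processes the (fewer than $f/2$) messages it receives from $M_r$; in $\alpha'$ it receives nothing. Since $r$ is a deterministic state machine, its outgoing messages in $\alpha'$ can differ from those in $\alpha$, so your claim that ``correct processes outside $\{r\}\cup M_r$ receive identical messages in $\alpha$ and $\alpha'$'' is false: those processes see different traffic from $r$, may reach different states, and---crucially---may now send messages to $r$ that they never sent in $\alpha$. Reliable links force such messages to be delivered eventually, so $r$ is not actually isolated in $\alpha'$; an implementation in which $r$ waits for, say, $f+1$ confirmations before answering its read simply blocks until they arrive and then answers correctly, and no specification violation results. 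The paper avoids exactly this cascade with the Dolev--Reischuk device your sketch omits: in the \emph{base} run the candidate readers form a set $V$ of Byzantine processes that simulate correct readers while ignoring their first $f/2$ incoming messages and staying silent toward one another. Because the eventual victim $q$ already behaves, in the base run, exactly like a correct process that receives nothing, turning it into a genuinely correct but isolated process in the second run changes nothing from any other process's point of view, so no new messages toward $q$ can appear. A pigeonhole over an all-correct run cannot be locally patched; you need the Byzantine ``ignore your first $f/2$ messages'' simulation (or an equivalent) already in the first run.

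A secondary weakness is the claim that in $\alpha''$ the processes outside $M_r$ are ``symmetrically'' silent toward $r$. This has no justification: $\alpha''$ contains no transfer at all, so its message pattern bears no relation to $\alpha$'s, and other readers may well contact $r$. The correct move, and the paper's, is to use asynchrony rather than silence: let $t_1$ be the time at which the isolated $r$'s read returns in $\alpha'$, and in $\alpha''$ delay every message addressed to $r$ until after $t_1$; indistinguishability for $r$ up to $t_1$ then follows from determinism. Your overall plan is the right one and the counting step is fine, but without these two repairs the indistinguishability chain does not close.
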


\begin{proof}
Let $\pi$ be an algorithm that implements asset transfer, and assume by contradiction that in all its runs with a single transfer correct processes send less than $(f/2)^2$ messages.
We look at all executions of $\pi$ with two accounts $a, b \in A$ s.t. $\textit{owner}(a) = p$ for some process $p \in \Pi$, and initially $A(a) = 1, A(b) = 0$.

Consider first an execution $\sigma_0$ in which the adversary, denoted $\textit{adv}_0$, corrupts a set $V$ of processes, not including $p$, such that $|V| = \lceil f/2 \rceil$.
Denote the set of remaining correct processes as $U$.
In $\sigma_0$, process $p$ calls $A.\textit{transfer}(a, [(b, 1)])$.
By the correctness definition of $A$, and since $p$ is correct, there exists a time $t_0$ during the run after which any correct process that invokes $A.\textit{read}(b)$  returns 1. 

The adversary $\textit{adv}_0$ causes the corrupt processes in $V$ to simulate the behavior of correct processes that call $A.\textit{read}(b)$ after $t_0$, and follow the algorithm except for the following changes: they ignore the first $f/2$ messages they receive from processes in $U$, and they do not send any message to other processes in $V$.
Note that while $t_0$ is not known to the processes, we construct the runs from the perspective of a global observer and may invoke read after $t_0$. 

Because correct processes send, in total, less than $(f/2)^2$ messages and corrupt processes do not send messages to other processes in $V$, then the processes in $V$ together receive less than $(f/2)^2$ messages. 
Thus, by the pigeonhole principle, there exists at least one process $q \in V$ that receives less than $f/2$ messages.
Denote the set of processes that send messages to $q$ as $U'$, and denote $U'' = U \setminus U'$.
Note that $U'$ may include process $p$, and that $|U'| < f/2$.

Next, we construct a  run $\sigma_1$ with an adversary $\textit{adv}_1$ that are the same as $\sigma_0$ and $\textit{adv}_0$, respectively, except for the following changes: $\textit{adv}_1$ corrupts all the processes in $V \setminus \{q\}$, and all the processes in $U'$.
Since $|U'| < f/2$ and $|V| \leq \lceil f/2 \rceil$, the adversary $\textit{adv}_1$ corrupts at most $f$ processes in $\sigma_1$.
$\textit{adv}_1$ prevents the corrupt processes from sending any message to $q$, but causes them to behave correctly towards all other correct processes in $U''$.

By definition, the behavior of the corrupt processes in $\sigma_1$, i.e., the processes in $U' \cup (V \setminus \{q\})$, towards the correct processes in $U''$ is the same as in $\sigma_0$.
Since process $q$ simulates a correct process that ignores the first $f/2$ messages in $\sigma_0$, its behavior towards the processes in $U$ is identical in both runs as well.
Thus, runs $\sigma_0$ and $\sigma_1$ are indistinguishable for the correct processes in $U''$, ensuring that they behave the same.
Since process $q$ acts in $\sigma_0$ like a correct process that does not receive any message, both runs are indistinguishable to it as well.

Nonetheless, process $q$ still has to return a value for its $A.\textit{read}(b)$ call.
Denote the time when the call returns as $t_1$.
If it returns a value different from $1$, then we conclude the proof, since it is a violation of the read call specification.
Otherwise, we construct a run $\sigma_2$ with an adversary $\textit{adv}_2$ that are the same as $\sigma_1$ and $\textit{adv}_1$, respectively, except that there is no transfer invocation and all messages to $q$ are delayed until after $t_1$.
For process $q$, runs $\sigma_1$ and $\sigma_2$ are indistinguishable until $t_1$, therefore it returns $1$ for the $A.\textit{read}(b)$ call, violating the read call specification which should return $0$, concluding the proof.
\end{proof}

We proved that the lower bound for the message complexity of an asset transfer object is $\Omega(n^2)$, assuming $f \in \Theta(n)$. 

\subsubsection{Upper bound}
\label{sec:assetTransfer:upperBound} \label{sec:assetTransfer:msgComplexity:UpperBound}
In~\cite{guerraoui2019consensus}, an implementation in the message-passing model for the asset transfer abstraction is provided.
It uses a broadcast service defined in~\cite{malkhi1997high} that tolerates up to $f < n/3$ Byzantine failures.
This broadcast has all the guarantees of reliable broadcast~\cite{bracha1987asynchronous} (integrity, agreement, and validity), and also preserves source order, i.e., any two correct processes $p_1$ and $p_2$ that deliver  messages $m$ and $m'$ broadcast from the same process $p_3$, do so in the same order.
The read operation is computed locally, and the transfer operation consists of a broadcast of a single message.

Several protocols can be used to implement such a source-order broadcast service, including a protocol in~\cite{malkhi1997high}.
Bracha's reliable broadcast~\cite{bracha1987asynchronous} can also be used to implement such a service if each correct process adds a sequence number to each message it broadcasts, and each correct process delivers messages from the same process in the order of the sequence numbers.
These protocols have a message complexity of $O(n^2)$ per broadcast, proving that the lower bound message complexity we prove above is tight.

Note that our definition for the \textit{transfer} call of the asset transfer abstraction allows transferring in each invocation money from one account to multiple accounts, while in~\cite{guerraoui2019consensus} the transfer call allows a transfer to a single account for each invocation.
The implementation in~\cite{guerraoui2019consensus} can easily be adjusted to support this change by including in each broadcast message the multiple accounts to which money is transferred. {}
\section{Bidirectional payment channel}
\label{sec:biChannel}
We seek to analyze if payment channels that are used in common blockchains as a second layer can also be used similarly on top of an asynchronous asset transfer system.
We discuss bidirectional payment channels, in which a channel is opened between two processes by making a transfer on the asset transfer system.
After the channel is opened, both processes can make bidirectional payments on the same channel.
Either process can close the channel at any time, after which their accounts in the asset transfer system reflect the state of the channel.
This abstraction is similar to payment channels in the Lightning Network~\cite{poon2016lightning} in Bitcoin~\cite{nakamoto2008bitcoin} and Raiden~\cite{raiden2020raidenImp} in Ethereum~\cite{wood2014ethereum}.
We compare our payment channel abstractions to the currently available implementations in the related work in \Cref{sec:relatedWork}.

First, we formally define this abstraction, and then provide an impossibility result, proving it cannot be implemented in asynchronous networks.

\subsection{Definition}
\begin{specification}[t]  \smaller

\caption{\textbf{Bidirectional payment channel abstraction.} Operations for process~$p$.}
\SetAlgoNoEnd
\label{alg:bidirectionalChannelAPI}
\DontPrintSemicolon
\SetInd{0.4em}{0.4em}

\let\oldnl\nl
\newcommand{\nonl}{
\renewcommand{\nl}{\let\nl\oldnl}}

\SetKwBlock{sharedObjects}{Shared Objects:}{}
\SetKwBlock{localVariables}{Local variables:}{}
\SetKwProg{myproc}{Procedure}{:}{}
\SetKwProg{myfunc}{Function}{:}{}

\nonl \sharedObjects{
\nonl	$A$ - asset transfer object \;
\nonl	$BC$ - Bidirectional payment channel object
}

\begin{multicols}{2}

\myproc{$BC.\textit{transfer}((a,b),\amount)$} {
\If{$BC(a,b) = \bot $} {
\Return}
$(\balance_a, \balance_b) \gets  BC(a,b)$ \;
\If{$\textit{owner}(a) = p \wedge \balance_a \geq \amount$}  {
$\textit{execute\_payment}((a,b), -\amount, \amount) $\;

}
\If{$\textit{owner}(b) = p \wedge \balance_b \geq \amount$}  {
$\textit{execute\_payment}((a,b), \amount, -\amount) $\;
}
}
\BlankLine

\myfunc{$\textit{execute\_payment}((a,b), \amount_a, \amount_b) $}
{
$(\balance_a, \balance_b) \gets  BC(a,b)$ \;
$\textit{new\_bal}_a \gets \balance_a + \amount_a$ \;
$\textit{new\_bal}_b \gets \balance_b + \amount_b$ \;
$ BC(a,b) \gets (\textit{new\_bal}_a, \textit{new\_bal}_b)$ \;
}
\end{multicols}
\BlankLine \BlankLine
\begin{multicols}{2}	
\myproc{$BC.\textit{close}((a,b), \balance)$} {
\If{$B(a,b) = \bot$} {
\Return \textit{fail}
}

$(\textit{curr\_bal}_a, \textit{curr\_bal}_b) \gets  BC(a,b)$ \;
$\textit{other\_bal} = \textit{curr\_bal}_a+\textit{curr\_bal}_b-\balance$ \;

\If{$\textit{owner}(a)  = p $} {
\If{$\balance \neq \textit{curr\_bal}_a $}{
\Return \textit{fail}
}
\Return $\textit{execute\_close}((a,b), \balance, \textit{\textit{other\_bal}})$
}
\If{$\textit{owner}(b)  = p $} {
\If{$\balance \neq \textit{curr\_bal}_b $}{
\Return \textit{fail}
}
\Return $\textit{execute\_close}((a,b), \textit{other\_bal}, \balance)$
}
\Return \textit{fail}
}

\myfunc{$\textit{execute\_close}((a,b), \amount_a, \amount_b) $} {
$A(a) \gets A(a) + \amount_a$  \;
$A(b) \gets A(b) + \amount_b$ \;
$BC(a,b) \gets \bot$ \;
\Return \textit{success}
}

\end{multicols}
\BlankLine
\end{specification}
\newcommand{\biChannel}{\ensuremath{BC}}

We define a bidirectional payment channel abstraction as a shared memory object $\biChannel$.
The formal definition is in  Specification~\ref{alg:bidirectionalChannelAPI}.
$\biChannel$ is defined based on the existence of an asset transfer object $A$.
A channel in $\biChannel$ is of the form $(a, b)$, where $a, b$ are accounts in $A$.

The state of a payment channel $\biChannel(a,b)$ is ${\{\mathbb{R}_{\geq 0} \times \mathbb{R}_{\geq 0} \} \cup \{\bot \}}$.
The channel $\biChannel(a,b)$ can be \emph{open}, and then $\biChannel(a,b) = (\balance_a, \balance_b)$, which represents the balances $\balance_a, \balance_b$  of accounts $a,b$ in the channel $(a,b)$, respectively.
If the channel is \emph{closed}, then its state is $\biChannel(a,b) = \bot$.

The set of operations is the following:
\begin{itemize}
\item $\textit{open}$.
We do not provide a detailed specification for this call, as we do not require the full specification to prove that there is no implementation for a bidirectional payment channel in the asynchronous message-passing model. 
Instead, we assume that all channels are open at the beginning of the run with some initial balances.

\item $\textit{transfer}((a,b),\amount)$.
A payment in channel $(a,b)$ is possible if the channel is open, if the caller process is a valid owner of either $a$ or $b$, and the caller's balance in the channel is enough to make the payment. Otherwise, it does nothing.
After the call ends, the state $\biChannel(a,b)$ is changed to reflect the payment.

\item $\textit{close}((a,b), \balance)$.
This call can be invoked if the caller process is a valid owner of either $a$ or $b$.
The close of the channel is successful if the process that invokes the call does not try to close it with balance \balance that is not the amount it has in the channel.
Otherwise, the call fails and does nothing.
The call transfers to accounts $a,b$ their balances from the channel, and then changes its status to $\bot$.
This call returns \textit{success} or \textit{fail} to indicate the outcome of the call.
\end{itemize}
In this paper, we consider sequentially consistent bidirectional channels.

\subsection{Impossibility of a bidirectional payment channel object}

\begin{algorithm*}[t]  \smaller
\caption{\textbf{Wait-free implementation of consensus among 2 processes using a bidirectional payment channel.} Operations for processes $p_1 = \textit{owner}(a), p_2 = \textit{owner}(b)$.}
\SetAlgoNoEnd
\label{alg:reductionConsensuswithBidirectionalChannel1}
\DontPrintSemicolon
\SetInd{0.4em}{0.4em}

\let\oldnl\nl
\newcommand{\nonl}{
\renewcommand{\nl}{\let\nl\oldnl}}

\SetKwBlock{sharedObjects}{Shared Objects:}{}
\SetKwProg{myproc}{Procedure}{:}{}
\SetKwProg{myfunc}{Function}{:}{}

\nonl \sharedObjects{
\nonl	$A$ - asset transfer object, initially two accounts $a, b \in A$ s.t. $A(a) = A(b) = 0$ \;
\nonl	$\biChannel$ - Bidirectional payment channel object, initially $\biChannel(a,b) = (1,1)$ \;
\nonl	$R_1, R_2$ - shared registers with read write calls, initially $R_1 = R_2 = \bot$ \;
}

\BlankLine 
\tcp*[h]{We assume  that at the beginning of the run there exists an open payment channel $(a,b)$ s.t. $\biChannel(a,b) = (1,1)$} \;	
\BlankLine
\begin{multicols}{2}

\tcp*[h]{\underline{Algorithm for process $p_1$}:} \;	
\myproc{$\textit{propose}(v)$} {
$R_1.\textit{write}(v)$ \label{alg:reductionConsensuswithBidirectionalChannel:r1write}\;
$\biChannel.\textit{transfer}((a,b),1) $\;
$\biChannel.\textit{close}((a,b), 0)$ \label{alg:reductionConsensuswithBidirectionalChannel1:p1close}\;
\Return $\textit{make\_decision}()$ \;
}

\BlankLine
\tcp*[h]{\underline{Algorithm for process $p_2$:}} \;	
\myproc{$\textit{propose}(v)$} {
$R_2 .\textit{write}(v)$  \label{alg:reductionConsensuswithBidirectionalChannel:r2write}\;
$\biChannel.\textit{close}((a,b), 1)$ \label{alg:reductionConsensuswithBidirectionalChannel1:p2close} \;
\Return $\textit{make\_decision}()$ \;
}
\break
\BlankLine
\tcp*[h]{\underline{Algorithm for processes $p_1$ and $p_2$:}} \;	
\myproc{$\textit{make\_decision}()$ } {
\textbf{wait} until $A.\textit{read}(b) \neq 0$  \label{alg:reductionConsensuswithBidirectionalChannel:ARead}\;
\uIf{$A.\textit{read}(b) = 2$} {
\Return $R_1.\textit{read}()$ \;
}
\Else {
\Return $R_2.\textit{read}()$ \;
}
}
\end{multicols}
\BlankLine
\end{algorithm*}

We show that implementing a sequentially consistent bidirectional payment channel in the message-passing model requires synchrony.
To this end, we solve wait-free consensus among $2$ processes with shared registers and an instance of $\biChannel$.
The consensus abstraction has one call, $\textit{propose}(v)$, which is called with some proposal $v$, and returns a value.
The returned value for any process making the call has to be an input of the call from one of the processes, and it has to be the same value for all invocations, regardless of the caller process.
We assume in this proof \emph{crash-fail} faults, i.e., a process corrupted by the adversary stops participating in the protocol but does not deviate from it.
Since this is an impossibility result and crash-fail faults are weaker than Byzantine faults, it also applies to runs with Byzantine processes.

\enlargethispage{\baselineskip}
\begin{lemma} \label{lem:consensusNumBiChannel}
Consensus has a wait-free implementation for $2$ processes in the read-write shared memory model with an instance of a bidirectional payment channel shared-memory object and shared registers.
\end{lemma}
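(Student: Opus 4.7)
The plan is to verify that \Cref{alg:reductionConsensuswithBidirectionalChannel1} satisfies the three consensus properties (validity, agreement, and wait-freedom) by exploiting the fact that the payment channel $\biChannel(a,b)$ can be closed at most once. The key idea is that the $\textit{close}$ operation serves as the atomic decision point: once some successful $\textit{close}$ happens, the channel state becomes $\bot$, so any later $\textit{close}$ or $\textit{transfer}$ call on $(a,b)$ is a no-op. By the sequential consistency of $\biChannel$, exactly one of the two processes' closes succeeds, and both processes agree on which one.

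First I would pin down the final value of $A(b)$ in every execution in which both processes reach their $\textit{close}$. If $p_1$'s call $\biChannel.\textit{close}((a,b), 0)$ is the one that succeeds, then by program order $p_1$ has already completed its $\textit{transfer}$, so at that moment $\biChannel(a,b) = (0, 2)$ and the close adds $2$ to $A(b)$, leaving $A(b) = 2$ permanently. If instead $p_2$'s call $\biChannel.\textit{close}((a,b), 1)$ is the one that succeeds, the channel state is either $(1,1)$ or $(0,2)$ depending on whether $p_1$'s transfer happened, but in both cases $p_2$'s close adds $1$ to $A(b)$, leaving $A(b) = 1$ permanently. In particular $A(b) \in \{1,2\}$ forever after the first successful close, and its value identifies the ``winning'' process. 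The boundary case where one process crashes early is handled by the same analysis, since the surviving process simply observes its own close succeed.

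Validity and agreement then follow directly. Each process writes its proposal to $R_i$ before invoking any operation on $\biChannel$, so by the per-process order preserved in the sequentially consistent history, $R_1$ is written before any event that produces $A(b)=2$, and $R_2$ is written before any event that produces $A(b)=1$. Hence the decision procedure always reads a properly proposed value. Moreover both processes base their decision solely on the (eventually fixed) value of $A(b)$, so they return the same value.

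The subtlest part will be wait-freedom of the spin loop \textbf{wait until} $A.\textit{read}(b) \neq 0$. Each process performs only a constant number of invocations on the shared objects apart from this loop, so termination hinges on this read eventually returning a positive value. A correct process reaching the loop has already executed its own $\textit{close}$, so by the analysis above at least one close has succeeded and the corresponding write to $A(b)$ is in the sequential history. Invoking the liveness guarantee of the asset transfer implementation stated in \Cref{sec:assetTransfer} --- that every successful transfer is eventually observed by reads of correct processes --- the loop returns a positive value after finitely many iterations. The main obstacle is lining up the consistency semantics of the two shared objects: one must argue that the credit to $A(b)$ performed inside $\textit{execute\_close}$ inherits the same ``eventually observed'' guarantee as an ordinary $A.\textit{transfer}$, which is natural since close writes directly into $A$, but should be stated carefully.
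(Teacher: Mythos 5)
Your proposal is correct and follows essentially the same route as the paper's proof: it analyzes Algorithm~\ref{alg:reductionConsensuswithBidirectionalChannel1}, uses the fact that exactly one \textit{close} on the sequentially consistent channel can succeed as the decision point, reads $A(b)\in\{1,2\}$ to identify the winner, and returns the corresponding register written before any channel operation. Your explicit case analysis of the channel state at the successful close and your attention to termination of the wait loop are slightly more detailed than the paper's argument, but the underlying idea is identical.
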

\begin{proof}
The algorithm for solving consensus among two processes using a bidirectional payment channel object is detailed in Alg.~\ref{alg:reductionConsensuswithBidirectionalChannel1}. 
We assume that there are two processes $p_1, p_2$ with ownership of accounts $a,b$, respectively, and an open payment channel $(a,b)$ at the beginning of the run with balances $\biChannel(a,b) = (1,1)$.

Before either of the processes invokes an operation on the payment channel, they write their proposal $v$ in a shared register (Lines~\ref{alg:reductionConsensuswithBidirectionalChannel:r1write} and~\ref{alg:reductionConsensuswithBidirectionalChannel:r2write}).
Then, $p_1$ attempts to make a payment on the channel and then close it, and $p_2$ tries to close the channel without making or accepting any payment.

Because $BC$ is sequentially consistent, the algorithm ensures that eventually after the channel is closed either the payment from $a$ to $b$ on the channel succeeds or not, and the balance in $b$'s account reflects it, i.e., there exists a time $t$ after which invoking $A.\textit{read}(b)$ returns either $1$ or $2$.

If the read call in \Cref{alg:reductionConsensuswithBidirectionalChannel:ARead} returns $2$, then the channel was closed by $p_1$ after it successfully made the payment on the channel.
Since before $p_1$ makes the payment on the channel it writes its proposal to register $R_1$, then its value is already written by the time the channel is closed, and it is returned by the propose call.
If the return value of the read call is~$1$, then process $p_2$ closed the channel.
Since $p_2$ closes the channel after it writes its proposal in $R_2$, then its proposed value is returned.

In either case, when the channel is closed, there is already a proposal written in either $R_1$ or $R_2$, i.e., the returned value is an input to the propose call by either process, and both processes return the same value.
\end{proof}

Based on the above theorem and FLP~\cite{fischer1985FLP}, we get the following result:
\begin{theorem} \label{cor:bidirectionalAsync}
There does not exist an implementation of the bidirectional payment channel abstraction in the asynchronous message-passing model.
\end{theorem}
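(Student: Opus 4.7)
The plan is to derive the theorem by combining Lemma~\ref{lem:consensusNumBiChannel} with the impossibility of deterministic consensus in the asynchronous message-passing model due to Fischer, Lynch, and Paterson~\cite{fischer1985FLP}. I would argue by contradiction: suppose some algorithm $\pi$ implements the bidirectional payment channel abstraction $\biChannel$ in an asynchronous message-passing system tolerating at least one crash-fail fault (recall from \Cref{sec:model} that $f \in \Theta(n)$, so in particular $f \geq 1$).

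First, I would lift Alg.~\ref{alg:reductionConsensuswithBidirectionalChannel1} from the shared-memory setting into message-passing. Every invocation of a $\biChannel$ operation would be replaced by the corresponding code of $\pi$; every read/write on the shared registers $R_1, R_2$ would be replaced by a standard asynchronous message-passing emulation of single-writer multi-reader atomic registers (e.g., the ABD construction), which is known to tolerate a minority of crash faults. Since the asset transfer object $A$ and the channel $\biChannel$ are themselves assumed to be implemented by message-passing algorithms, and the register emulation is purely message-passing, the composition is a deterministic asynchronous message-passing algorithm whose external API consists only of the two $\textit{propose}$ procedures of processes $p_1$ and $p_2$.

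Next I would invoke Lemma~\ref{lem:consensusNumBiChannel}: the resulting algorithm is a wait-free solution to binary (in fact arbitrary-valued) consensus for two processes, tolerating the crash of one of them. Since FLP rules out any such deterministic asynchronous consensus algorithm even for two processes with at most one crash failure, we obtain a contradiction, and hence $\pi$ cannot exist. Since crash faults are weaker than Byzantine faults, the impossibility extends to the Byzantine model assumed throughout the paper.

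The main delicate point I expect is bookkeeping about what ``asynchronous message-passing'' permits: I need to make sure that the shared memory model in which Lemma~\ref{lem:consensusNumBiChannel} is stated can indeed be faithfully realized on top of asynchronous message-passing for the specific two-process interaction used. The registers $R_1, R_2$ are each single-writer and only the two participants access them, so the standard ABD-style emulation over the $n$ processes of the underlying system suffices under $f < n/2$ (implied by $f < n/3$). The channel and asset transfer abstractions are, by assumption, already message-passing implementations, so the composition is clean. Once this is in place, invoking FLP is routine, and the theorem follows.
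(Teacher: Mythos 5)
Your proposal is correct and takes essentially the same route as the paper, which derives the theorem directly from Lemma~\ref{lem:consensusNumBiChannel} together with FLP; you merely spell out the (standard) details the paper leaves implicit, namely the ABD-style register emulation and the composition of the message-passing implementations.
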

\section{Unidirectional payment channel}
\label{sec:uniChannel}
\enlargethispage{\baselineskip}
After proving that a bidirectional payment channel cannot be implemented in asynchronous networks, we explore another type of payment channel, \emph{unidirectional}.
The main difference from bidirectional channels is that unidirectional channels are asymmetric.
There is only one user, the source, who can open and transfer money in the channel.
We show in which cases unidirectional payment channels can be implemented in an asynchronous message-passing network and in which cases they cannot.
We begin by formally defining the unidirectional payment channel abstraction.

\subsection{Definition}

We define a unidirectional payment channel abstraction as a shared memory object $B$.
The formal definition is in Specification~\ref{alg:uniPaymentChannelOps}. 
$B$ is defined based on the existence of an asset transfer object $A$.
A payment channel in $B$ is of the form $(a, b)$ where $a, b$ are accounts in $A$.

\begin{specification}[t]  \smaller
\caption{\textbf{Unidirectional payment channel abstraction}. Operations for process $p$.}
\SetAlgoNoEnd
\label{alg:uniPaymentChannelOps}
\DontPrintSemicolon
\SetInd{0.4em}{0.4em}

\let\oldnl\nl
\newcommand{\nonl}{
\renewcommand{\nl}{\let\nl\oldnl}}

\SetKwBlock{sharedObjects}{Shared Objects:}{}
\SetKwProg{myproc}{Procedure}{:}{}
\SetKwProg{myfunc}{Function}{:}{}

\nonl \sharedObjects{
\nonl	$A$ - asset transfer object, with initial accounts \;
\nonl 	$B$ - unidirectional payment channel object \;
}

\begin{multicols}{2}
\BlankLine \BlankLine
\myproc{$B.\textit{open}((a,b),\amount)$} {
\If{$p \neq \textit{owner}(a) \vee B(a,b) \neq \bot \vee A(a) < \amount$ } {
\Return \textit{fail}
}
$A(a) \gets A(a) - \amount$ \;
$B(a,b) \gets (\amount, 0)$ \;
\Return \textit{success}
}

\BlankLine
\myproc{$B.\textit{transfer}((a,b),\amount)$} {
\If{$p \neq \textit{owner}(a) \vee B(a,b) = \bot$} {
\Return
}
$(\balance_a, \balance_b) \gets  B(a,b)$ \;
\If{$\balance_a < \amount$} {
\Return 
}
$\textit{new\_bal}_a \gets \balance_a - \amount$ \;
$\textit{new\_bal}_b \gets \balance_b + \amount$ \;
$B(a,b) \gets  (\textit{new\_bal}_a, \textit{new\_bal}_b) $ \;
}

\BlankLine
\myproc{$B.\textit{target\_close}((a,b),\balance_b))$} {
\If{$p \neq \textit{owner}(b)  \vee  B(a,b) = \bot$}  {
\Return  \textit{fail}
}
$(\textit{curr\_bal}_a, \textit{curr\_bal}_b) \gets  B(a,b)$ \;
\If{$\balance_b \neq \textit{curr\_bal}_b$ }{
\Return  \textit{fail}
}
$\balance_a \gets \textit{curr\_bal}_a +  \textit{curr\_bal}_b - \balance_b$ \;
\Return $\textit{execute\_close}((a,b), (\balance_a, \balance_b))$ \;
}

\break
\myproc{$B.\textit{source\_close}((a,b), \balance_a)$} {
\If{$p \neq \textit{owner}(a)  \vee  B(a,b) = \bot$}  {
\Return  \textit{fail}
}
$(\textit{curr\_bal}_a, \textit{curr\_bal}_b) \gets  B(a,b)$ \;
\If{$\balance_a \neq \textit{curr\_bal}_a $ }{
\Return  \textit{fail}
}
$\balance_b \gets \textit{curr\_bal}_a +  \textit{curr\_bal}_b - \balance_a$ \;
\Return $\textit{execute\_close}((a,b), (\balance_a, \balance_b))$ \;
}

\BlankLine
\myfunc{$\textit{execute\_close}((a,b), (\balance_a, \balance_b))$ }{
$A(a) \gets A(a) + \balance_a$ \;
$A(b) \gets A(b) + \balance_b$ \;
$B(a,b) \gets \bot$ \;
\Return \textit{success}
}
\end{multicols}
\smallskip
\end{specification}

The state of a payment channel $B(a,b)$ is ${\{\mathbb{R}_{\geq 0} \times \mathbb{R}_{\geq 0} \} \cup \{\bot \}}$.
Intuitively, a unidirectional payment channel $B(a,b)$ can either be open, and then $B(a,b) = (\balance_{a}, \balance_{b})$, or closed, and then $B(a,b) = \bot$.
The initial state is that all unidirectional payment channels are closed, e.g., for payment channel $(a,b)$, the state is $ B(a,b) =  \bot$ at the beginning of the run.

The set of operations is the following:
\begin{itemize}
\item $\textit{open}((a,b),\amount)$.
A process that owns account $a$ can open a unidirectional payment channel with any other account $b$ with amount $\amount$, as long as it has enough balance in $A$ and does not already have an open payment channel with $b$.
The call returns \textit{success} if the channel is opened successfully and \textit{fail} otherwise.

\item $\textit{transfer}((a,b),\amount)$.
A payment in the payment channel $(a,b)$ is possible if the channel is open, if the caller of the operation $a$ is $\textit{owner}(a)$ and $a$ has enough balance in the channel to make the payment.
This call does not return a response.

\enlargethispage{1.9\baselineskip}
\item $\textit{source\_close}((a,b), \balance_a)$.
A source closing of a payment channel $(a,b)$ can be called by $\textit{owner}(a)$.
The call succeeds if the process that invokes the call does not try to close it with balance $\balance_a$ that is not the amount it has in the channel.
After the call ends, the balances in the channel are transferred to accounts $a,b$ in the asset transfer system.
The call returns \textit{success} if the channel is closed successfully and \textit{fail} otherwise.

\item $\textit{target\_close}((a,b), \balance_b)$.
A target closure of a payment channel $(a,b)$ is symmetrical to the source\_close call, but is invoked by $\textit{owner}(b)$.
\end{itemize}

We differentiate between two types of unidirectional payment channels, depending on whether the source close call is included in the allowed set of operations of the shared object or not.
Note that without source close, the source depends on the target to close the channel to receive its deposit back after the channel is opened.
However, for the target to receive its balance from the channel in the asset transfer system, it has to eventually close the channel.
When the target closes the channel, the source also receives its respective balance.

We do not consider a channel with source close and without target close, as in this case only the source has operations it can call, and the target relies on the source for all its operations regarding the channel, and cannot receive the funds transferred to it in the channel on-chain unless the source closes the channel.
Also, since in this case, only the source has operations it can invoke, this abstraction can be implemented easily in an asynchronous network as it does not require consensus or any interaction at all between the source and target.
We also believe this case does not correspond correctly to existing implementations of payment channels as discussed in \Cref{sec:relatedWork}.

\subsection{Impossibility of a unidirectional payment channel with source close}

We show that a unidirectional payment channel that has the source close operation has a consensus number of at least 2, and therefore cannot be implemented in an asynchronous message passing network.
Formally, we prove:

\begin{restatable}{lemma}{uniChannelWithSouceCloseLemma}
Consensus has a wait-free implementation for $2$ processes in the read-write shared memory model with an instance of a unidirectional payment channel with source close shared memory object and shared registers.
\end{restatable}

\begin{proof}
In \Cref{alg:reductionConsensuswithBidirectionalChannel1}, only process $p_1$ transfers money to $p_2$ using the channel, and they both attempt to close the channel: $p_1$ after the payment is made, and $p_2$ without accepting the payment.
Thus, changing the close call in \Cref{alg:reductionConsensuswithBidirectionalChannel1:p1close} to $B.\textit{source\_close}$ and the call in \Cref{alg:reductionConsensuswithBidirectionalChannel1:p2close} to $B.\textit{target\_close}$ yields a consensus algorithm among 2 processes using a unidirectional payment channel with source and target close operations.
\end{proof}

Based on the above lemma and FLP~\cite{fischer1985FLP}, we get the following result:
\begin{theorem} \label{cor:unidirectionalAsync}
There does not exist an implementation of the unidirectional payment channel abstraction with source close in the asynchronous message-passing model.
\end{theorem}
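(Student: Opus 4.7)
The plan is to derive this theorem as a corollary of the preceding lemma together with the FLP impossibility~\cite{fischer1985FLP}, exactly mirroring the pattern used for \Cref{cor:bidirectionalAsync}. The key observation is that the lemma establishes that a unidirectional payment channel with source close, combined with shared read/write registers, yields a wait-free 2-process consensus algorithm. So to rule out an asynchronous message-passing implementation of the channel, it suffices to argue that if such an implementation existed, consensus would be solvable in that model, contradicting FLP.

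Concretely, I would proceed by contradiction. Assume an algorithm $\pi$ implements the unidirectional payment channel abstraction with source close in the asynchronous message-passing model tolerating crash-fail faults. Shared read/write registers are well known to be implementable in the asynchronous message-passing model with a minority of crash failures (via the ABD emulation), and the asset transfer object on which the channel is defined is likewise implementable asynchronously by the construction of~\cite{guerraoui2019consensus} described in \Cref{sec:assetTransfer:upperBound}. Composing these implementations with $\pi$, we obtain a purely message-passing protocol that provides the processes $p_1 = \textit{owner}(a)$ and $p_2 = \textit{owner}(b)$ with an open channel, an asset transfer object, and shared registers, all wait-free under crash failures.

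Plugging this composed system into the reduction of the preceding lemma gives a wait-free algorithm solving binary consensus between $p_1$ and $p_2$ in the asynchronous message-passing model, tolerating the crash of one of the two participants. This directly contradicts the FLP impossibility result, which rules out deterministic consensus in an asynchronous message-passing system even against a single crash failure. We conclude that no such $\pi$ exists, proving the theorem.

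The argument is essentially a bookkeeping step over the preceding lemma, so I do not expect any real obstacle; the only subtlety to state explicitly is that the lemma's reduction uses crash-fail faults (as noted just before \Cref{lem:consensusNumBiChannel}), which is precisely the fault model of FLP, so the reduction transfers cleanly. Because the impossibility is already for the crash-fail case, it a fortiori rules out implementations that must tolerate Byzantine faults as assumed in the rest of the paper.
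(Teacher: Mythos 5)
Your proposal matches the paper's own argument, which derives the theorem in one line as a corollary of the preceding lemma together with FLP; you simply spell out the composition (register emulation, asset transfer implementation, and the reduction) that the paper leaves implicit. No gaps.
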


\subsection{Unidirectional payment channel without source close}
\label{sec:uniChanelImplementation:UpperBound}
Next, we discuss unidirectional payment channel without the source close operation.
We first prove a lower bound on the message complexity of an implementation and then prove that this lower bound is tight by providing an implementation for the abstraction in the asynchronous message-passing model.

\subparagraph{Lower bound.}
We prove that any algorithm that implements the unidirectional payment channel specification incurs a combined message complexity for open, transfer, and close of $\Omega(n^2)$.
To this end, we prove the following lemma:

\begin{restatable}{lemma}{uniChannelMessageComplexity}
\label{lem:uniChannelMessageComplexity}
Consider an algorithm that implements the unidirectional payment channel abstraction $B$, and an asset transfer $A$.
Then there exists a run with $B.\textit{open}$, $B.\textit{transfer}$, $B.\textit{target\_close}$, and $A.\textit{read}$ calls, in which correct processes send at least $(f/2)^2$ messages.
\end{restatable}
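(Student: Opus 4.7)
The plan is to mirror the Dolev--Reischuk--style argument already used in the proof of Theorem~\ref{thm:assetTransferMsgComplexity}, lifting it from a raw asset transfer to the composite sequence $B.\textit{open}$, $B.\textit{transfer}$, $B.\textit{target\_close}$ followed by $A.\textit{read}$. The guiding intuition is that once a correct target has closed the channel, the corresponding balance must become visible to any correct reader of account $b$; if the open/transfer/close phase collectively sends fewer than $(f/2)^2$ messages, information about the close will fail to reach some reader, and we will construct a run indistinguishable to that reader in which no channel ever existed.

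For the setup, I would fix two correct processes $p = \textit{owner}(a)$ and $q' = \textit{owner}(b)$ with initial asset-transfer balances $A(a)=1$, $A(b)=0$, and no open channels at the start. In the baseline run $\sigma_0$, the adversary corrupts a set $V \subseteq \Pi \setminus \{p,q'\}$ of $\lceil f/2 \rceil$ processes. Process $p$ invokes $B.\textit{open}((a,b),1)$ followed by $B.\textit{transfer}((a,b),1)$, and then $q'$ invokes $B.\textit{target\_close}((a,b),1)$. By BSC of $B$ together with BSC of $A$, there exists a time $t_0$ after which every $A.\textit{read}(b)$ by a correct process must return $1$. I would then have each process in $V$ simulate an $A.\textit{read}(b)$ after $t_0$, following the protocol but discarding the first $f/2$ messages it receives and refusing to send messages to other members of $V$, exactly as in Theorem~\ref{thm:assetTransferMsgComplexity}.

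The contradiction assumption bounds the total messages from correct processes by $(f/2)^2$, so the set $V$ collectively receives strictly fewer; by the pigeonhole principle some $q \in V$ receives at most $f/2$ messages. Let $U'$ be the set of processes that actually send to $q$ in $\sigma_0$ (so $|U'| < f/2$) and $U'' = (\Pi \setminus V) \setminus U'$. I would then define $\sigma_1$ whose adversary corrupts $(V \setminus \{q\}) \cup U'$, which is at most $f$ processes, leaving $q$ correct; these corrupt processes behave toward $U''$ exactly as in $\sigma_0$ but send nothing to $q$. By construction $\sigma_0$ and $\sigma_1$ are indistinguishable to $q$ (which still effectively processes no messages) and to all of $U''$, so $q$'s read must terminate and return the same value it would have returned in $\sigma_0$. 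Finally, as in Theorem~\ref{thm:assetTransferMsgComplexity}, I would build $\sigma_2$ identical to $\sigma_1$ except that $p$ and $q'$ never invoke any channel operation and messages to $q$ are delayed past its return time; indistinguishability forces $q$ to return the same value in $\sigma_2$, contradicting $A(b)=0$.

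The main obstacle I anticipate is justifying the existence of the time $t_0$ cleanly, because the composite sequence goes through three distinct channel operations and then a read on a \emph{different} abstraction $A$, so I must argue from the specification of $\textit{target\_close}$ (which explicitly writes $A(b) \gets A(b) + \balance_b$ and then nullifies $B(a,b)$) together with the BSC guarantee for $A$ that successful closes by a correct target are eventually visible in reads of $A(b)$. The other subtlety is bookkeeping the adversarial budget: I need $|V \setminus \{q\}| + |U'| \le \lceil f/2 \rceil - 1 + (f/2 - 1) \le f$, and I must confirm that even if $p$ or $q'$ happens to land in $U'$ the argument still goes through, since those processes continue to perform their invocations in $\sigma_1$ and merely omit messages to $q$ — exactly the same situation handled in the proof of Theorem~\ref{thm:assetTransferMsgComplexity}.
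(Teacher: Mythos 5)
Your proposal is correct and follows essentially the same route as the paper: turn the sequence $B.\textit{open}$, $B.\textit{transfer}$, $B.\textit{target\_close}$ into an effective transfer of $1$ from $a$ to $b$ and then rerun the Dolev--Reischuk isolation argument of \Cref{thm:assetTransferMsgComplexity} verbatim (three runs, pigeonhole on the isolated reader $q$, budget check on $(V\setminus\{q\})\cup U'$). The only difference is that the paper sets $\textit{owner}(a)=\textit{owner}(b)=p$, so a single correct process performs all three channel calls and the run is literally an instance of the theorem's with no need to separately justify $t_0$; your two-owner variant also works, but obliges you to argue eventual visibility of the target's close via the $\textit{execute\_close}$ specification together with the paper's eventual-read guarantee for $A$ — an obligation you correctly identify and discharge.
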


\begin{proof}
	We can simulate an $A.\textit{transfer}$ call between two accounts $a, b$ with initial balances $1, 0$, respectively.
	This is done by having $\textit{owner}(a)$ call $B.\textit{open}((a,b),1)$, followed $B.\textit{transfer}((a,b),1)$, and lastly, having $\textit{owner}(b)$ call $B.\textit{target\_close}((a,b),1)$.
	If accounts $a,b$ are owned by the same process $p$, we can construct exactly the the same runs used in the proof of \Cref{thm:assetTransferMsgComplexity} in order to prove this lemma.
	The only change is that we replace the $A.\textit{transfer}$ call invoked by $p$ in the original proof with the three calls mentioned above.
\end{proof}

\begin{algorithm*}[t!]\smaller[2]
\caption{\textbf{Unidirectional payment channel without source\_close implementation in the asynchronous message-passing model}. Operations for process $p$.}

\SetAlgoNoEnd
\label{alg:uniPaymentChannelImplementation}
\DontPrintSemicolon
\SetInd{0.2em}{0.2em}

\let\oldnl\nl
\newcommand{\nonl}{
\renewcommand{\nl}{\let\nl\oldnl}}

\SetKwBlock{sharedObjects}{Shared Objects:}{}
\SetKwBlock{localVariables}{Local variables:}{}
\SetKwProg{myproc}{Procedure}{:}{}
\SetKwProg{myfunc}{Function}{:}{}
\SetKwProg{myupon}{Upon}{:}{}

\nonl \sharedObjects{
\nonl	$A$ - asset transfer object \;
}
\BlankLine
\nonl\localVariables {
\nonl	$\textit{source}[] $ - a dictionary with the balances of all  channels that $p$ is the source, initially $\bot$  \;
\nonl	$\textit{target}[] $ - a dictionary with $A$ multisig invocations for all channels $p$ is the target, initially $\bot$ \;
}
\nonl\begin{minipage}{0.454\textwidth}%
\tcp{\underline{This call can be invoked by $\textit{owner}(a)$}}
\myproc{$\textit{open}((a,b),\amount)$ \label{alg:uniPaymentChannelImplementation:openCall}} {
\If{\smaller [0.5]$\textit{owner}(a) \neq p \vee \textit{source}[ab] \neq \bot  \vee  A.\textit{read}(a) < \amount$ \label{alg:uniPaymentChannelImplementation:openCallIf}} {
\Return \textit{fail}
}
invoke $A.\textit{transfer}(a, [(ab, \amount)]) $ \label{alg:uniPaymentChannelImplementation:openCallATransfer} \tcp*{$ab$ is multisig }
create $A.\textit{transfer}(ab, [(a, \amount), (b, 0)])$ \newline invocation $tx$ \label{alg:uniPaymentChannelImplementation:createOpenTX}\;
add $p$'s signature to $tx$ \tcp*{$tx$ is not a valid\phantom{zzzzzzz} transaction without $\textit{owner}(b)$'s signature}\label{alg:uniPaymentChannelImplementation:sendOpenTX}
\textbf{send} $\langle \text{"open"}, tx, \amount \rangle$ to $\textit{owner}(b)$ \;
$\textit{source}[ab] \gets (\amount, 0)$ \;
\Return \textit{success}
}
\end{minipage}\phantom{XXX}%
\begin{minipage}{0.45\textwidth}
\tcp{\underline{This message is received by $\textit{owner}(b)$}}
\nl\myupon{receiving $\langle \text{"open"}, tx, \amount \rangle$ and $A.\textit{read}(ab) = \amount$\label{alg:uniPaymentChannelImplementation:receiveOpenMsg}} {
let $tx$ be $A.\textit{transfer}(ab, [(a, \amount), (b, 0)])$ invocation\;
\If{$\textit{owner}(b)\neq p \vee\neg \textit{validate}(tx, a) \vee \textit{target}[ab] \neq \bot$ } {
\Return 
}
$\textit{target}[ab] \gets tx$ \;
}
\vspace{4.8em}
\end{minipage}
\BlankLine 
\nonl\begin{minipage}{0.46\textwidth}%
\tcp{\underline{This call can be invoked by $\textit{owner}(a)$}}
\myproc{$\textit{transfer}((a,b),\amount)$\label{alg:uniPaymentChannelImplementation:transferCall}} {
\If{$\textit{owner}(a) \neq p \vee \textit{source}[ab] = \bot$\label{alg:uniPaymentChannelImplementation:transferCall:if1}}
{
\Return
}
$(\balance_a, \balance_b) \gets \textit{source}[ab]$ \;
\If{$\balance_a < \amount $ \label{alg:uniPaymentChannelImplementation:transferCall:if2}}
{
\Return 
}
$(\textit{new\_bal}_a, \textit{new\_bal}_b) \gets (\balance_a - \amount, \balance_b + \amount)$ \;
{\smaller[0.5] create $A.\textit{transfer}(ab, [(a, \textit{new\_bal}_a), (b,\textit{new\_bal}_b)])$ invocation $tx$} \label{alg:uniPaymentChannelImplementation:createTransferTX}\;
add $p$'s signature to $tx$ \label{alg:uniPaymentChannelImplementation:addSignatureToTransfer}\;
\textbf{send} $\langle \text{"transfer"}, tx, \amount \rangle$ to $\textit{owner}(b)$ \label{alg:uniPaymentChannelImplementation:transferCallSendTX}\;
$\textit{source}[ab] \gets (\textit{new\_bal}_a, \textit{new\_bal}_b)$ \;

}
\end{minipage}\phantom{XXk}%
\begin{minipage}{0.46\textwidth}
\tcp{\underline{This message is received by $\textit{owner}(b)$}}
\nl\myupon{receiving $\langle \text{"transfer"}, tx, \amount \rangle$} {
let $tx$ be $A.\textit{transfer}(ab, [(a, \balance_a), (b, \balance_b)])$  invocation 

\If{$\textit{owner}(b) \neq p \vee \neg \textit{validate}(tx, a) \vee \textit{target}[ab] = \bot$ \label{alg:uniPaymentChannelImplementation:transferCallMsg}}
{
\Return 
}
get $A.\textit{transfer}(ab, [(a, \textit{c\_bal}_a), (b, \textit{c\_bal}_b)])$ invocation from $\textit{target}[ab]$ \tcp*{The currently\phantom{zzzz} \phantom{zzzzzzzzzzzzzzzzzzzzzzzzzzzzzz}stored transaction}
\If{$\textit{c\_bal}_a \neq \balance_a - \amount \vee \textit{c\_bal}_b \neq \balance_b + \amount$} {
\Return 
}
$\textit{target}[ab] \gets tx$ \tcp*{store new transaction}
}	
\end{minipage}
\BlankLine 
\nonl\begin{minipage}{0.45\textwidth}%
\tcp{\underline{This call can be invoked by $\textit{owner}(b)$}}
\myproc{$\textit{target\_close}((a,b),\balance_b)$\label{alg:uniPaymentChannelImplementation:closeCall}} {
\If{$\textit{owner}(b) \neq p \vee \textit{target}[ab] = \bot $ }
{
\Return \textit{fail}
}
get $A.\textit{transfer}(ab, [(a, \textit{curr\_bal}_a), (b, \textit{curr\_bal}_b)])$ transaction $tx$ from $\textit{target}[ab]$ \;
\If{$\balance_b \neq \textit{curr\_bal}_b$} {
\Return \textit{fail}
}
add $p$'s signature to $tx$  \tcp*{complete the multisig} \label{alg:uniPaymentChannelImplementation:addSignatureToClose} 
invoke $tx$ \tcp*{invoke $A$ with closing transaction}  \label{alg:uniPaymentChannelImplementation:closeCallInvokeA} 
$\textit{target}[ab] \gets \bot$  \;
\textbf{send} $\langle \text{"close"}, (a,b) \rangle $ to $\textit{owner}(a)$\;
\Return \textit{success}
}
\end{minipage}\phantom{XXXl}%
\begin{minipage}{0.46\textwidth}
\tcp{\underline{This message is received by $\textit{owner}(a)$}}
\nl\myupon{receiving $\langle \text{"close"}, (a,b)\rangle$ and $A.\textit{read}(ab) = 0$\label{alg:uniPaymentChannelImplementation:receiveCloseMsg}} { 
$\textit{source}[ab] \gets \bot$ \label{alg:uniPaymentChannelImplementation:closeSourceBot}\;
}
\BlankLine\BlankLine
\BlankLine  \BlankLine 
\BlankLine  \BlankLine 
\myfunc{$\textit{validate}(tx, a)$} {
\Return $tx$ is a valid invocation of $A$ and it contains $\textit{owner}(a)$'s signature \;
}
\vspace{4.8em}
\end{minipage}
\end{algorithm*}

\subparagraph{Upper bound.} 
We provide an algorithm in the asynchronous message-passing model that implements a unidirectional payment channel without source close.
The algorithm assumes an asset transfer system $A$, implemented as in~\cite{guerraoui2019consensus}, and discussed in~\Cref{sec:assetTransfer:upperBound}.
The algorithm is detailed in \Cref{alg:uniPaymentChannelImplementation}.
We denote an account name with a string $c_1c_2\cdots c_k \in A$ to refer to an account with a public key that is a $k$-of-$k$ multisignature of $\{\textit{owner}(c_1), \ldots, \textit{owner}(c_k)\}$.
For example, to sign an invocation of the asset transfer object $A$ of account $ab$, like transferring money from $ab$ to another account, both 
$\textit{owner}(a)$ and $\textit{owner}(b)$  need to sign the message with their respective private keys before the call can be invoked.
The transfer call with an appropriate multisignature can be invoked by any process, in particular, the last process to sign the invocation and complete the signature.
When the algorithm mentions that a process creates an $A.\textit{transfer}$ invocation, e.g., in lines~\ref{alg:uniPaymentChannelImplementation:createOpenTX} and~\ref{alg:uniPaymentChannelImplementation:createTransferTX}, it does not mean the process invokes the transfer operation, but rather that it adds its signature to a multisignature message allowing an invocation of $A$'s operation.
Any invocation of $A.\textit{transfer}$ call is explicitly mentioned (lines~\ref{alg:uniPaymentChannelImplementation:openCallATransfer}, \ref{alg:uniPaymentChannelImplementation:addSignatureToClose}).
We further assume FIFO order on messages sent between every two processes. This can be easily implemented with sequence numbers.
\enlargethispage{-\baselineskip}

We explain below the implementation details of the algorithm for each of the operations:
\begin{itemize}
\item Open.
The open procedure of a channel $(a,b)$ (\Cref{alg:uniPaymentChannelImplementation:openCall}) requires $p_1 =  \textit{owner}(a)$ to make an initial deposit by invoking the transfer method of $A$ from account $a$ to a multisignature account $ab$ (\Cref{alg:uniPaymentChannelImplementation:openCallATransfer}).
After the transfer is completed, $p_1$ creates a transaction $tx$ that transfers the deposit back to its account and $0$ to $p_2 = \textit{owner}(b)$ and sends it to $p_2$ (\Cref{alg:uniPaymentChannelImplementation:sendOpenTX}).
Note that at this stage, process $p_1$ cannot invoke $A$ with $tx$ since it requires a multisignature, but when $p_2$ receives it, it can add its signature as well and then invoke $A$ with the $tx$.

When $p_2$ receives $tx$, this transaction message it also waits for the balance in account $ab$ to reflect the deposit (\Cref{alg:uniPaymentChannelImplementation:receiveOpenMsg}) to ensure the money was deposited in account $ab$ using the asset transfer system, after which it considers the account as open.

\item Transfer.
When $p_1$ wants to transfer money in an open channel $(a,b)$ (\Cref{alg:uniPaymentChannelImplementation:transferCall}) it creates a transaction $tx$ which is an $A.\textit{transfer}$ invocation transferring money from the multisignature account $ab$ to accounts $a$ and $b$ with the last balance of the channel after the payment.
E.g., if the balance of the channel is $(10,1)$, and $p_1$ wants to make a payment of $1$ on the channel, it creates transaction $tx$ required to invoke $A.\textit{transfer}(ab,[(a,9),(b,2)])$, which transfers $9$ money units to $p_1$ and $2$ to $p_2$.
Then $p_1$ adds its signature to $tx$ (\Cref{alg:uniPaymentChannelImplementation:addSignatureToTransfer}), and sends it to $p_2$, which stores it.
Note that $p_1$ cannot invoke $A$ with $tx$ since it is still missing $p_2$'s signature.
Thus, making a payment on the channel simply requires one message from the source user to the target user containing $tx$, and multiple payments can be made on the channel without invoking $A$'s transfer call.

\item Close.
When $p_2$ wants to close the channel $(a,b)$ (\Cref{alg:uniPaymentChannelImplementation:closeCall}), it takes the last transaction of account $ab$ it received from $p_1$ and adds its signature to it (\Cref{alg:uniPaymentChannelImplementation:addSignatureToClose}).
$p_2$'s signature completes the multisignature, making it a valid transaction, and allowing $p_2$ to use it to invoke $A$'s transfer operation (\Cref{alg:uniPaymentChannelImplementation:closeCallInvokeA}).
Process $p_2$ notifies $p_1$ that it closed the channel, after which $p_1$ considers the channel closed.
After the channel is closed, $p_1$ can  reopen it with a new call of the open operation.
\end{itemize}
Thus, opening and closing of the channel requires invoking a single $A.\textit{transfer}$ operation, which incurs $O(n^2)$ messages because of the broadcast, but transferring money on the channel itself requires only one message per transfer. 

\textbf{Correctness.}
We prove below that the implementation (\Cref{alg:uniPaymentChannelImplementation}) is Byzantine sequentially consistent (BSC) with respect to the sequential specification (Specification~\ref{alg:uniPaymentChannelOps}).

\begin{definition} \label{def:historyAugmentation}
Let $E$ be an execution of \Cref{alg:uniPaymentChannelImplementation} and $H$ its matching history.
Let $\widetilde{H}$ be a completion of $H$ by removing any pending open and close calls that did not reach $A$'s transfer call invocation (Lines \ref{alg:uniPaymentChannelImplementation:openCallATransfer} and \ref{alg:uniPaymentChannelImplementation:closeCallInvokeA}, respectively), and let $\widetilde{H}|_c$ be $\widetilde{H}$'s history with the operations of correct processes.

Define $H'$ as an augmentation of $\widetilde{H}|_c$ as follows:
For any correct process $q = \textit{owner}(b)$ that invokes a successful $B.\textit{target\_close}((a,b), \balance_b)$ s.t. process $p = \textit{owner}(a)$ is a Byzantine process, we add before the target close call the following two invocations to $H'$ by $p$:
\begin{itemize}
\item $B.\textit{open}((a,b), \balance_b)$ with an account $a$ s.t. $A(a) \geq \balance_b$.
Since account $a$ has enough money to open the channel, the open call succeeds.

\item $B.\textit{transfer}((a,b), \balance_b)$ which is invoked immediately after the previous open call returns.
\end{itemize}
\end{definition}

The two added Byzantine calls ensure that when $q$ invokes the close operation, it succeeds.
Next, we construct a linearization of $H'$.
\enlargethispage{1.6\baselineskip}

\begin{definition}
Let $H'$ be the augmented history of \Cref{alg:uniPaymentChannelImplementation} as defined in \Cref{def:historyAugmentation}.
Let $E'$ be a linearization of $H'$ by defining the following linearization points:
\begin{itemize}
\item Any open or close call that fails is linearized immediately after its invocation.

\item A transfer call that returns because of the if statements (Lines \ref{alg:uniPaymentChannelImplementation:transferCall:if1}, \ref{alg:uniPaymentChannelImplementation:transferCall:if2}) is linearized immediately after its invocation.

\item Any successful $\textit{open}((a,b), \amount)$ s.t. $q = \textit{owner}(b)$ is a correct process, then it linearizes after $q$ reaches \Cref{alg:uniPaymentChannelImplementation:receiveOpenMsg}.
If $q$ is Byzantine, the call linearizes when it ends.

\item Any $\textit{transfer}((a,b), \amount)$ that reaches \Cref{alg:uniPaymentChannelImplementation:transferCallSendTX} s.t. $q = \textit{owner}(b)$ is a correct process, then it linearizes after $q$ reaches \Cref{alg:uniPaymentChannelImplementation:transferCallMsg}.
If $q$ is Byzantine, the call linearizes when it ends.

\item Any successful $\textit{target\_close}((a,b),  \balance_b)$ s.t. $p = \textit{owner}(a)$ is a correct process, then it linearizes after $p$ reaches \Cref{alg:uniPaymentChannelImplementation:receiveCloseMsg}.
If $p$ is Byzantine, the call linearizes when it ends.
\end{itemize}
\end{definition}

The open and transfer calls change the state of the channel.
By the sequential specification, the target can call target close with this new state.
Therefore, the linearization point of these calls occur after the target receives the message informing it of the new state.
Regarding the close call linearization point: the source can only reopen a channel after it learns that the channel has been closed, and therefore the linearization point is when the source receives the information of the closure and verifies it on-chain.

Next, we provide below the lemmas showing that the linearization $E'$ satisfies the sequential specification.

\begin{restatable}{lemma}{openLemma}
A $B.\textit{open}$ call for channel $(a,b)$ succeeds only if the channel is closed when the call is invoked.
\end{restatable}
\begin{proof}
	Immediate from the algorithm.
	A channel $(a,b)$ opening fails if $\textit{source}[ab] = \bot$ (\Cref{alg:uniPaymentChannelImplementation:openCallIf}).
	This is the case for all channels at the beginning of the run, or if the channel was previously closed successfully (\Cref{alg:uniPaymentChannelImplementation:closeSourceBot}).
\end{proof}

\begin{restatable}{lemma}{transferLemma}
For any $B.\textit{transfer}$ call for channel $(a,b)$ there is a preceding open call for the channel in $H'$.
\end{restatable}
\begin{proof}
	If the transfer call in $H'$ is invoked by a correct process, then from the algorithm $\textit{source}[ab] \neq \bot$.
	This is only possible by the algorithm if $p$ invokes an open call for the channel before the transfer invocation.
	If $q$ is correct, then the open and transfer calls linearize when $q$ receives the messages for the corresponding calls (Lines \ref{alg:uniPaymentChannelImplementation:receiveOpenMsg}, \ref{alg:uniPaymentChannelImplementation:transferCallSendTX}).
	Since we assume FIFO order on the links between any two processes, then the open call linearizes before the transfer call.
	If $q$ is Byzantine, then the open and transfer calls linearize immediately after they successfully return.
	
	If $p$ is Byzantine, then the transfer invocation is in $H'$ because there is some close invocation by a correct process $q$.
	Before that, there is also a matching open call by $p$.
	The open call linearizes before the transfer call.
\end{proof}

\begin{restatable}{lemma}{closeLemma}
For any successful $B.\textit{target\_close}((a,b), \balance_b)$ call in $H'$ there is a preceding $B.\textit{open}((a,b), \amount)$ call for channel $(a,b)$ s.t. $\amount \geq \balance_b$, followed by a $B.\textit{transfer}$ call that changes the state of the channel to $(\balance_a, \balance_b)$ for $\balance_{a} = \amount - \balance_b$.
\end{restatable}
\begin{proof}
	If the close call ends successfully, then $q$ has in $\textit{target}[ab]$ a valid transaction $A.\textit{transfer}(ab,[(a, \balance_a), (b, \balance_b)])$, otherwise, the call fails.
	Therefore, if $p$ is a correct process, it opens the channel $(a,b)$ with some deposit \amount, and transfers in the channel s.t. the balances in the channel change to $(\balance_a, \balance_b)$.
	Both the open and transfer are linearized before the close invocation, otherwise, the close call fails.
	If $p$ is Byzantine, then we add the matching open and transfer invocations to $H'$ which are linearized before the close invocation.
\end{proof}

\begin{restatable}{lemma}{eventuallyEndsLemma}
In an infinite execution of \Cref{alg:uniPaymentChannelImplementation} every call invoked by a correct process eventually returns.
\end{restatable}
\begin{proof}
	In all cases where an open or close calls return \textit{fail} it does so immediately, since it is done prior to any invocation of $A$.
	Transfer calls that return due to the if statements (Lines \ref{alg:uniPaymentChannelImplementation:transferCall:if1}, \ref{alg:uniPaymentChannelImplementation:transferCall:if2}) also return immediately.
	
	For the open call, the if condition in the channel open call (\Cref{alg:uniPaymentChannelImplementation:openCallIf}) ensures that the process that invokes the call owns account $a$ and that it has enough balance to open the channel.
	We also assume that a correct process does not invoke a new call before a previous call has a response event.
	Therefore, the conditions checked during the if statement hold when $A$'s transfer call is invoked, and by the asset transfer specification the call succeeds.
	
	A transfer call that does not invoke any of $A$'s calls, nor does it wait for a reply after it sends the transaction in \Cref{alg:uniPaymentChannelImplementation:transferCallSendTX}.
	Therefore, this call also returns immediately.
	
	A target\_close call that returns success invokes $A$ with a transfer call that transfers money from account $ab$ (\Cref{alg:uniPaymentChannelImplementation:closeCallInvokeA}).
	To reach this line, the process has to first check if the channel is open, and it has the matching transaction in $\textit{target}[ab]$ during the if statement of the call.
	Therefore, invoking $A$ will eventually succeed by the asset transfer specification, and the target\_close call returns successfully.
\end{proof}

Thus, we can conclude the following result from the lemmas above:
\begin{theorem}
\Cref{alg:uniPaymentChannelImplementation} implements a Byzantine sequentially consistent unidirectional payment channel without source close abstraction.
\end{theorem}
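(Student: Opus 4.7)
The plan is to assemble the final theorem by combining the augmentation, the linearization, and the four lemmas sketched above into a single statement of Byzantine sequential consistency. Concretely, I will take an arbitrary execution $E$ of \Cref{alg:uniPaymentChannelImplementation} with history $H$, form the completion $\widetilde{H}$ by dropping any open/close call that has not yet reached its $A.\textit{transfer}$ invocation (lines~\ref{alg:uniPaymentChannelImplementation:openCallATransfer}, \ref{alg:uniPaymentChannelImplementation:closeCallInvokeA}), and then construct the augmented history $H'$ as described by inserting, in front of every successful $B.\textit{target\_close}$ invoked by a correct target against a Byzantine source, two fictitious Byzantine operations: an $\textit{open}$ with the same total amount and a $\textit{transfer}$ that produces the exact balance $\balance_b$ being claimed. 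This will guarantee $H'|_c = \widetilde{H}|_c$ while making every real close operation justifiable by a preceding open and transfer in $H'$.

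Next I will total-order $H'$ using the linearization points already listed in the excerpt: failed calls and trivially returning transfers linearize at their invocation; successful open/transfer/target\_close calls against correct counterparties linearize at the receive events (lines~\ref{alg:uniPaymentChannelImplementation:receiveOpenMsg}, \ref{alg:uniPaymentChannelImplementation:transferCallSendTX}, \ref{alg:uniPaymentChannelImplementation:receiveCloseMsg}), while those against Byzantine counterparties linearize at their response events. I will argue this yields a legal sequential history by appealing to the three safety lemmas above: the first lemma covers the open branch of Specification~\ref{alg:uniPaymentChannelOps}, the second ensures every transfer is preceded by an open in the linearization (crucially using FIFO channels so that in the correct-target case the open message is processed before the transfer message), and the third ensures the balance $(\balance_a,\balance_b)$ returned to $A$ by a successful target close matches what the transfer and open calls installed. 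Together these cover every branch of the sequential specification, so the induced sequential history is legal.

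To finish BSC, I will check that the linearization respects the per-process order of $\widetilde{H}$: for each process $p$, the chosen linearization point of every operation lies between its invocation and its response, so $S|_p = \widetilde{H}|_p$ and the precedence required by the BSC definition is preserved. Liveness is then handled by the final lemma, which rules out any pending operation by a correct process in an infinite execution, so the completion $\widetilde{H}$ really does cover every correct-process invocation.

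The main obstacle is the Byzantine source case: a correct target may legitimately close a channel with balance $\balance_b$ even though the source never cooperated with the asset-transfer implementation in any way the system can observe through $A$. The fix is precisely the augmentation step — since the target only accepts a close when it holds a properly multisigned $A.\textit{transfer}(ab,[(a,\balance_a),(b,\balance_b)])$ transaction, the unforgeability of signatures guarantees the source must have signed matching open and transfer transactions, which is exactly what licenses inserting those two operations into $H'$ and placing them in the linearization before the close. I expect the bulk of the careful writing to be in justifying that this insertion is consistent with the rest of the linearization (in particular that no spurious operation is invented for a channel that the source never actually funded on $A$, which follows from the $A.\textit{read}(ab)=\amount$ guard at line~\ref{alg:uniPaymentChannelImplementation:receiveOpenMsg}).
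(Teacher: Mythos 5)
Your proposal follows essentially the same route as the paper: the same completion $\widetilde{H}$, the same augmentation of $H'$ with fictitious Byzantine open and transfer calls before a correct target's close, the same linearization points, and the same decomposition into the three safety lemmas plus the liveness lemma. The additional remarks you make (the per-process order check and the signature-unforgeability justification for the augmentation) are correct elaborations of steps the paper leaves implicit, so the argument is sound and matches the paper's proof.
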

\subparagraph*{Changing the algorithm to be Byzantine linearizable.}
The algorithm can be changed to be Byzantine linearizable by having each message answered with an ack message.
E.g., after the open message is received in \Cref{alg:uniPaymentChannelImplementation:transferCall:if1}, the process sends an ack message back to the original sender.
The linearization point then is when the ack message is received.
This change requires sending more messages as part of the algorithm and also extends the latency, but this change does not affect the overall asymptotic message complexity.
This is also the reason why we choose BSC as the correctness criterion, not Byzantine linearizability. {}

\enlargethispage{1.5\baselineskip}
\section{Chain payments}
\label{sec:chain}
We can extend the discussion of payment channels to chain payments.
A chain payment system allows making payments off-chain between users who do not share a direct payment channel between them but do share a route through intermediate users.
For example, suppose that Alice wants to make a payment to Bob, but she does not share a direct payment channel with him.
Rather, she has a channel with Charlie and Charlie has a channel with Bob.
A chain payment allows using the route from Alice to Bob via Charlie to make the payment on all channels atomically.
Chain payments are used extensively in Lightning Network~\cite{poon2016lightning}.

The intuitive way to define a chain payment abstraction is with a single operation that makes a payment through a chain $((a_1, a_2), (a_2, a_3), \ldots, (a_{k-1}, a_k))$, and the outcome of the payment affects all channels of the chain in an atomic manner.
For example, suppose that the balances of the above channels of the chain are $(\balance_1, \balance_2), \ldots, (\balance_{k-1}, \balance_k)$, respectively, and a payment of \amount is made via the chain.
Then, after the linearization point, the balances of the channels are $(\balance_1-\amount, \balance_2+\amount), \ldots, (\balance_{k-1}-\amount, \balance_k+\amount)$, respectively.
In this case, assuming that the channels are bidirectional or unidirectional with source close, it can be proven that the consensus number~\cite{herlihy1991wait} of such chain payment object is $k$, meaning this object can be used to solve consensus between $k$ processes, in a similar manner to the $2$-consensus we prove in this paper for these channel types.

We also note that even if the channels of the chain are unidirectional without source close, implementing a chain payment is not intuitive and straightforward in asynchronous networks.
A possible solution is to adopt the use of \emph{Hash Time-Locked Contracts (HTLCs)}~\cite{decker2015fast,HTLCwiki} which are in use in the Lightning Network for chain payments.
An HTLC is a special conditioned payment between two users Alice and Bob.
An HTLC allows Alice to make a conditioned payment to Bob that includes some timeout $\Delta$ and a hash value $y$.
Bob can receive the payment if he exposes on-chain a value $x$ whose hash value is $y$ before $\Delta$ elapses.
We can equip the asset transfer system with a similar Hash Time Contract operation, without the timeout component.
That is, Bob receives the payment if he exposes $x$ without any timeout assumptions.
With this, we can implement chain payments in asynchronous networks based on the unidirectional channels without source close we presented in \Cref{alg:uniPaymentChannelImplementation}.
We leave as open future work to formalize these ideas and explore other possible asynchronous implementations of  chain payments. 

\section{Related work}
\label{sec:relatedWork}
Pedone and Schiper~\cite{pedone2002handling} identified that a transfer-like system can be implemented under asynchrony.
Gupta~\cite{gupta2016non}  showed that consensus is not needed for payment transactions.
The first definition of the asset transfer abstraction is due to Guerraoui et al.~\cite{guerraoui2019consensus}.
Subsequentially, Auvolat et al.~\cite{auvolat2020money} provide a weaker specification for asset transfer and detail an implementation that uses a broadcast service that guarantees FIFO order between every two processes.
Astro~\cite{collins2020online} implements and empirically evaluates an asynchronous payment system.

Solving Byzantine consensus in an asynchronous network is possible by using randomization to circumvent the FLP~\cite{fischer1985FLP} result.
Earlier protocols such as~\cite{ben1994power,rabin1983randomized} have exponential message complexity.
Later protocols such as~\cite{miller2016honey,abraham2019asymptotically,guo2020dumbo,keidar2021all} improve the message complexity in various settings, but they are not deterministic, and therefore their performance can only be measured in the expected case.

\enlargethispage{1.5\baselineskip}
There is also extensive research done on scaling cryptocurrencies using payment channels, including the Lightning Network~\cite{poon2016lightning}, Teechain~\cite{lind2019teechain}, Bolt~\cite{green2017bolt}, Sprites~\cite{miller2017sprites}, Perun~\cite{dziembowski2017perun}, Duplex Micropayment Channels~\cite{decker2015fast}, Raiden~\cite{raiden2020raidenImp}, and more.
In Ethereum~\cite{wood2014ethereum}, there are multiple second-layer networks like Arbitrum~\cite{kalodner2018arbitrum}, StarkNet~\cite{ben2018scalable}, and Optimism~\cite{optimismNetwork}.
All these works assume an underlying synchronous network to function correctly. 
E.g., Lightning Network uses a penalizing mechanism where one side of the channel can confiscate the other side's balance in the channel if it misbehaves and tries to close the channel at a stale state.
But for this mechanism to work correctly, the penalizing party has to place a transaction on-chain within a certain time period, thus requiring a synchronous network.
Brick~\cite{avarikioti2021brick} is a payment channel that preserves safety and liveness in asynchrony, but to do so requires a rather complex third-party entities (referred to as wardens) which validate channel transactions before they can be placed on-chain.
This work also assumes rational (and not Byzantine) behavior of the wardens.

Spilman~\cite{spilman_2013} proposed in 2013 a unidirectional payment channel implementation that shares similar concepts to \Cref{alg:uniPaymentChannelImplementation}.
Spilman's design has a timeout for each channel. 
The target of the channel has to close the channel before the timeout passes, otherwise, the source side can refund its initial deposit in the channel.
Because of the timeout, this design relies on network synchrony.
To the best of our knowledge, our implementation of a unidirectional channel without source close is the first that works on an underlying asynchronous network without requiring any third-party assistance to operate the channel, and in the presence of Byzantine processes.

We are also the first to show a quadratic lower bound for payments in asset transfer systems, as well as to explore second-layer payment channels as a scaling solution in asynchrony.

\section{Conclusion}
\label{sec:conclusion}
In this paper we presented the possibility of using payment channels in asynchronous asset transfer systems as a scaling solution.
We showed that an asset transfer system requires a quadratic message complexity per payment.
Then, we showed a series of possibility and impossibility results regarding payment channels as a scaling solution. {}

\bibliography{p029-Naor}

\end{document}